\newtheorem{theorem}{Theorem}
\newtheorem{lemma}[theorem]{Lemma}
\newtheorem{fact}[theorem]{Fact}
\newtheorem{corollary}[theorem]{Corollary}
\theoremstyle{definition}
\newtheorem{algorithm}[theorem]{Algorithm}
\newcolumntype{C}{>{$}c<{$}}
\newcolumntype{R}{>{$}r<{$}}
\colorlet{lightgray}{black!10!white}
\newcommand{\card}[1]{|#1|}
\newcommand{\from}{:}
\newcommand{\etc}{etc.\xspace}
\newcommand{\ie}{i.\,e.\xspace}
\newcommand{\pcm}{p_c}
\newcommand{\pmax}{p_{\text{max}}}
\newcommand{\Per}{P}
\newcommand{\tc}{t_c}
\newcommand{\msca}{{\normalfont\textsc{MS-CA}\xspace}}
\newcommand{\msfssp}{{\normalfont\textsc{MS-FSSP}\xspace}}
\newcommand{\N}{\mathbb{N}}
\DeclareMathOperator{\lcm}{\mathrm{lcm}}
\DeclareMathOperator{\pfx}{\mathrm{pfx}}
\DeclareMathOperator{\sfx}{\mathrm{sfx}}
\DeclareMathOperator{\supp}{\mathrm{supp}}
\newcommand{\sB}{\texttt{\#}}
\newcommand{\sG}{\texttt{G}}
\newcommand{\sS}{\texttt{\char95}}
\newcommand{\sF}{\texttt{F}}
\newcommand{\sL}{\texttt{<}}
\newcommand{\sR}{\texttt{>}}
\newcommand{\x}{\times}
\newcommand{\Z}{\mathbb{Z}}
\NewDocumentCommand\statebox{m}{% m #1: state
  \hbox to 4ex{%
    \hss
    \vrule height 3.8ex depth -0.3ex%
    \vbox to 3.8ex{%
      \hrule width 3.8ex height 0.1ex %
      \vss
      \hbox to 3.8ex{\hss #1\hss}%
      \vss
      \hrule width 3.8ex height 0.1ex %
    }%
    \vrule height 3.8ex depth -0.3ex%
    \hss
  }
}
\NewDocumentCommand\cell{m}{% m #1 : state
  \statebox{#1}%
}
\NewDocumentCommand\cR{t+}{% signal to the right; if + present: delayed storage
  \IfBooleanTF#1%
  {\cell{\sR\hbox to 0pt{\hspace*{1pt}\raisebox{0.5pt}{$\cdot$}}}}
  {\cell{\sR}}%
}
\NewDocumentCommand{\cL}{t+}{% signal to the left; if + present: delayed storage
  \IfBooleanTF{#1}%
  {\cell{\hbox to 0pt{\hspace*{1pt}\raisebox{0.5pt}{$\cdot$}}\sL}}%
  {\cell{\sL}}%
}
\NewDocumentCommand{\cG}{o}{%  cell with general state
  \cell{\texttt{\sG}}%
}
\NewDocumentCommand{\cH}{o}{% border cell
  \cell{\texttt{\#}}%
}
\NewDocumentCommand\cE{ }{% empty cell
  \cell{\texttt{ }}%
}
\newcommand{\ac}{\tikz\fill[red](0,0)--(3ex,0)--(1.5ex,-1mm)--cycle;}
\newcommand{\nc}{\tikz\fill[white](0,0)--(3ex,0)--(1.5ex,-1mm)--cycle;}
\newcommand{\cu}{\multirow{2}{4em}[3ex]{ common\\ \ update}}
\begin{document}

\title{A faster algorithm for the FSSP in one-dimensional CA with
  multiple speeds}

\author{Thomas Worsch\\
  Karlsruhe Institute of Technology,
  Department of Informatics \\
  \texttt{worsch at kit.edu}
  % 76131 Karlsruhe, Germany \\
  % Tel.: +49-721-60844311\\
  % Fax: +49-721-60855022\\
  % \email{worsch@kit.edu}           %  \\
}

\maketitle

\begin{abstract}
  In cellular automata with multiple speeds for each cell $i$ there is
  a positive integer $p_i$ such that this cell updates its state still
  periodically but only at times which are a multiple of $p_i$.
  Additionally there is a finite upper bound on all $p_i$.

  Manzoni and Umeo have described an algorithm for these
  (one-dimen\-sio\-nal) cellular automata which solves the Firing Squad
  Synchronization Problem.
  This algorithm needs linear time (in the number of cells to be
  synchronized) but for many problem instances it is slower than the
  optimum time by some positive constant factor.
  In the present paper we derive lower bounds on possible
  synchronization times and describe an algorithm which is never
  slower and in some cases faster than the one by Manzoni and Umeo and
  which is close to a lower bound (up to a constant summand) in more
  cases.
\end{abstract}

% --------------------------------------------------------------------
\section{Introduction}
\label{s:introduction}

The \emph{Firing Squad Synchronization Problem (FSSP)} has a
relatively long history in the field of cellular automata.
The formulation of the problem dates back to the late fifties and
first solutions were published in the early sixties.
A general overview of different variants of the problem and solutions
with many references can be found in \cite{Umeo_2009_FSSP_ic}.
Readers interested in more recent developments concerning several
specialized problems and questions are referred to the survey
\cite{Umeo_2020_HSCA_ar}.

In recent years asynchronous CA have received a lot of attention.
In a \enquote{really} asynchronous setting (when nothing can be
assumed about the relation between updates of different cells) it is
of course impossible to achieve synchronization.
As a middle ground the FSSP has been considered in what Manzoni and
Umeo \cite{Manzoni_2014_FSSP_ar} have called \emph{CA with multiple
  speeds}, abbreviated in the following as \msca.
In these CA different cells may update their states at different
times.
But there is still enough regularity so that the problem setting of
the FSSP makes sense:
As in standard CA there is a global clock.
For each cell $i$ there is a positive integer $p_i$ such that this
cell only updates its state at times $t$ which are a multiple of
$p_i$.
We will call $p_i$ the \emph{period} of cell $i$.
Additionally there is a finite upper bound on all $p_i$, so that it
can be assumed that each cell has stored $p_i$ as part of its
state.
This also means that there are always times (namely the multiples of
the least common multiple of all periods) when all cells update their
states simultaneously.

The rest of this paper is organized as followed:
In Section~\ref{s:basics} we fix some notation, review the basics of
standard CA in general and of the FSSP for them.
In Section~\ref{s:msca} cellular automata with multiple speeds (\msca)
and the corresponding FSSP will be introduced.
Since most algorithms for the FSSP make heavy use of signals, we have
a closer look at what can happen with them in \msca.
In Section~\ref{s:lower-bound} some lower bounds for the
synchronization times will be derived.
Finally, in Section~\ref{s:details} an algorithm for the FSSP in
\msca{} will be described in detail.

% --------------------------------------------------------------------
\section{Basics}
\label{s:basics}

$\Z$ denotes the set of integers, $\N_+$ the set of positive integers
and $\N_0=\N_+\cup\{0\}$.
For $k\in \N_0$ and $M\subset\Z$ we define
$k\cdot M=\{km \mid m\in M\}$.
For $k\in\N_+$ let $\N_k=\{i\in\N_+ \mid 1\leq i\leq n\}$.

The greatest common divisor of a set $M$ of numbers is abbreviated as
$\gcd M$ and the least common multiple as $\lcm M$.

We write $B^A$ for the set of all functions $f\from A\to B$.
The cardinality of a set $A$ is denoted $\card{A}$.

For a finite alphabet $A$ and $k\in\N_0$ we write $A^k$ for the set of
all words over $A$ having length $k$, $A^{\leq k}$ for
$\bigcup_{i=0}^k A^i$, and $A^*=\bigcup_{i=0}^{\infty} A^i$.
%
%The word \enquote{alphabet} is to be understood in an abstract sense;
%we will use the set of states of a CA as one.
For a word $w\in A^*$ and some $k\in\N_0$ the longest prefix of $w$
which has length at most $k$ is denoted as $\pfx_k(w)$, \ie
$\pfx_k(w)$ is the prefix of length $k$ of $w$ or the whole word $w$
if it is shorter than $k$.
Analogously $\sfx_k(w)$ is used for suffixes of $w$.

Usually cellular automata are specified by a finite set $S$ of states,
a neighborhood $N$, and local transition function $f: S^N\to S$.
In the present paper we will only consider one-dimensional CA with
Moore neighborhood $N=\{-1,0,1\}$ with radius $1$.

Therefore a (global) \emph{configuration} of a CA is a function
$c\from \Z\to S$, \ie $c\in S^{\Z}$.
Given a configuration $c$ and some cell $i\in\Z$ the so-called
\emph{local configuration observed by $i$ in $c$} is the mapping
$g\from N \to S: n \mapsto c(i+n)$ and denoted by $c_{i+N}$.
In the standard definition of CA the local transition function $f$
induces a \emph{global transition function} $F\from S^{\Z} \to S^{\Z}$
describing one step of the CA from $c$ to $F(c)$ by requiring that
$F(c)(i) = f(c_{i+N})$ holds for each $i\in\Z$.

By contrast in \msca{} it is not possible to speak about \emph{the}
successor configuration.
The relevant definitions will be given and discussed in the next
section.

Before, we quickly recap the \emph{Firing Squad Synchronization
  Problem} (FSSP).
A CA solving the FSSP has to have a set of states
$S \supseteq \{\sB, \sG, \sS, \sF\}$.
For $n\in\N_+$ the \emph{problem instance of size $n$} is the
initial configuration $I_n=c$ where
\begin{alignat*}{4}
  \begin{alignedat}{1}
    c(i) = \sB && \text{ \ if } i\leq 0
  \end{alignedat}
  &\hspace*{1.5em}&
  c(1) = \sG
  &\hspace*{1.5em}&
  \begin{alignedat}{1}
    c(i) = \sS && \text{ \ if } 1< i \leq n 
  \end{alignedat}
  &\hspace*{1.5em}&
  \begin{alignedat}{1}
    c(i) = \sB && \text{ \ if } i>n 
  \end{alignedat}
\end{alignat*}
Cells outside the segment between $1$ and $n$ are in a state $\sB$
which is supposed to be fixed by all transitions.
State $\sG$ is the initial state for the so-called \emph{general},
state $\sS$ is the initial state for all other cells, and state $\sF$
indicates the cells have been synchronized.

The goal is to find a local transition function which makes the CA
transit from each $I_n$ to configuration $F_n$ in which all cells
$1,\dots,n$ are in state $\sF$ (the other cells still all in state
$\sB$) and no cell ever was in state $\sF$ before.
In addition the local transition function has to satisfy
$f(\sS,\sS,\sS)=\sS$ and $f(\sS,\sS,\sB)=\sS$, which prohibits the
trivial \enquote{solution} to have all cells enter state $\sF$ in the
first step and implies that \enquote{activities} have to start at the
\sG-cell and spread to other cells from there.

Within the framework of synchronization let's call the set
$\supp(c)=\{i\in \Z \mid c(i)\not=\sB\}$ the \emph{support} of
configuration $c$.
As a consequence of all these requirements during a computation
starting with some problem instance $I_n$ all subsequent
configurations have the same support $\N_n$.

It is well known that there are CA which achieve synchronization in
time $2n-2$ (for $n\geq 2$) and that no CA can be faster, not even for
a single problem instance.

% --------------------------------------------------------------------
\section{CA with multiple speeds}
\label{s:msca}

% --------------------------------------------------------------------
\subsection{Definition of \msca}
\label{subs:msca}

A cellular automaton with multiple speeds (\msca) is a specialization
of standard CA.
Its specifiation requires a \emph{finite} set $\Per\subseteq \N_+$ of
so-called possible \emph{periods} (in \cite{Manzoni_2014_FSSP_ar} they
are called lengths of update cycles).
Before a computation starts a period has to be assigned to each cell
which remains fixed throughout the computation.
Requiring $\Per$ to be finite is meaningful for several reasons:
\begin{itemize}
\item It can be shown \cite[Prop.~3.1]{Manzoni_2014_FSSP_ar} that
  otherwise it is impossible to solve the FSSP for \msca{} with one
  fixed set of states.
\item We want that each cell can make its computation depend on its
  own period and those of its neighbors to the left and right, but of
  course the analogue of the local transition function should still
  have a finite description.
  To this end we want to be able to assume that each has its own
  period stored in its state.
\end{itemize}
For the rest of the paper we assume that the set of states is always
of the form $S=\Per\x S'$, and that the transition function
\emph{never changes} the first component.
We will denote the period of a cell $i$ as $p_i$.
For $s=(p,s')\in S=L\x S'$ we write $\pi_{p}$ and $\pi_s$ for the
projections on the first and second component and analogously for
global configurations.
For a global configuration $c\in S^{\Z}$ we write
$P(c)=\{p_i \mid i\in\supp(c)\}$ (or simply $P$ if $c$ is clear from
the context) for the set of numbers that are periods of cells in the
support of $c$; cells in state $\sB$ can be ignored because they
don't change their state by definition.

For \msca{} it is not possible to speak about \emph{the} successor
configuration.
Instead it is necessary to know how many time steps have already
happened since the CA started.
Borrowing some notation from asynchronous CA, for any subset
$A\subseteq\Z$ of cells and any $c\in S^{\Z}$ denote by $F_A(c)$ the
configuration reached from $c$ if exactly the cells in $A$ update
their states according to $f$ and all other cells do not change their
state:
\begin{equation*}
  \forall i\in\Z : F_A(c)(i)=
  \begin{cases}
    f(c_{i+N}) & \text{if } i\in A \\
    c(i) & \text{if } i\notin A
  \end{cases}
\end{equation*}
\noindent
(Thus, the global transition function of a standard CA is $F_{\Z}$.)
Given some \msca{} $C$ and some time $t$ denote by
$A_t = \{ i\in\Z \mid 0= t\bmod p_i \}$ the set of so-called
\emph{active} cells at time $t$.
Then, for each initial configuration $c$ the computation resulting
from it is the sequence $(c_0,c_1,c_2,\dots)$ where $c_0 = c$ and
$c_{t+1} = F_{A_t}(c_t)$ for each $t\in\N_0$.
In particular this means, that at time $t=0$ \emph{all} cells will
update their states according to $f$.
More generally this is true for all $t\in\pcm\cdot\N_0$, where $\pcm$
is the least common multiple of all $p\in\Per$, \ie all $t$ that are a
multiple of all elements in $\Per$.
We will speak of a \emph{common update} when $t=\pcm$.

The observations collected in the following lemma are very simple and
don't need an explicit proof.

\begin{lemma}
  Let $g=\gcd P$ be the greatest common divisor of all $p\in P$ and
  let $P'=\{ p/g \mid p\in P\}$.
  Let $A_t=\{ i\in\Z \mid 0= t\bmod p_i \}$ as before.
  
  \begin{enumerate}
  \item For each $t\notin g\cdot\N_+$ the set $A_t$ is empty.
  \item For each computation $C=(c_0,c_1,\dots)$ for each $k\in\N_0$
    $c_{kg+1} = c_{kg+2} = \cdots = c_{kg+g}$.
  \item The computation $C'=(c_0,c_g,c_{2g},\dots)$ results when using
    $P'$ instead of $P$ and exactly the same local transition function.
  \item When all cells involved in a computation $C$ have the same
    period $p$, $C$ is simply a $p$ times slower \enquote{copy} of the
    computation in a standard CA.
  \end{enumerate}
\end{lemma}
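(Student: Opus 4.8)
The plan is to prove the four statements in order, since each later one leans on its predecessors and the whole chain rests on a single divisibility observation. First I would dispose of part~1. The only fact needed is that $g=\gcd P$ divides every $p_i\in P$, so $p_i\mid t$ forces $g\mid t$; hence a nonempty $A_t$ requires $t$ to be a multiple of $g$, and contrapositively, a positive $t$ that is not a multiple of $g$ gives $A_t=\emptyset$. (The literal statement must be read for $t\in\N_+$: the time $t=0$ is the initial common update with $A_0=\Z\neq\emptyset$, and is excluded.)

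Part~2 then follows immediately from part~1 together with $F_\emptyset(c)=c$. For any $k\in\N_0$ and any $j$ with $1\le j\le g-1$, the time $t=kg+j$ satisfies $t\equiv j\not\equiv 0\pmod g$, so $A_t=\emptyset$ and therefore $c_{t+1}=F_{A_t}(c_t)=c_t$. Chaining these equalities over $j=1,\dots,g-1$ yields $c_{kg+1}=c_{kg+2}=\cdots=c_{kg+g}$; the single genuine update inside the block occurs at the multiple $t=kg$, which lies just outside the stated range.

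For part~3 the key preliminary is the active-set correspondence. Writing $p'_i=p_i/g$, one has $p'_i\mid s \iff p_i\mid sg$, so the active set $A'_s$ of the $P'$-system at time $s$ equals $A_{sg}$. I would then prove $c'_s=c_{sg}$ by induction on $s$, with base case $c'_0=c_0$. For the step, assuming $c'_s=c_{sg}$, one computes $c'_{s+1}=F_{A'_s}(c'_s)=F_{A_{sg}}(c_{sg})=c_{sg+1}$, and part~2 (with $k=s$) identifies $c_{sg+1}$ with $c_{(s+1)g}$, closing the induction. Part~4 is then the special case $P=\{p\}$, so that $g=p$ and $P'=\{1\}$: with every period equal to $1$ each cell is active at every step, i.e.\ $A'_s=\Z$ and $F_{A'_s}=F_{\Z}$ is exactly the standard global transition, whence $C'$ is the ordinary run $c'_s=F_{\Z}^{\,s}(c_0)$ and, by parts~2 and~3, $C$ reproduces each such configuration $p$ times in a row.

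I expect no real obstacle here, as the content is elementary divisibility and index bookkeeping; the author is right that it needs no detailed proof. The only place demanding a little care is the off-by-one alignment in parts~2 and~3, namely keeping the genuine update at $t=kg$ correctly placed relative to the constant block $c_{kg+1},\dots,c_{kg+g}$. This is precisely why the inductive step in part~3 must pass through part~2 (which relocates the computed $c_{sg+1}$ to $c_{(s+1)g}$) rather than appealing to the raw recurrence directly.
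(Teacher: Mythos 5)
Your proof is correct, and since the paper deliberately gives no proof of this lemma (it declares the observations \enquote{very simple} and omits the argument), your chain --- divisibility gives part~1, $F_\emptyset(c)=c$ gives part~2, the active-set identity $A'_s=A_{sg}$ plus induction through part~2 gives part~3, and specialization to $P=\{p\}$ gives part~4 --- is exactly the routine argument the author intends. Your parenthetical on $t=0$ is a genuine catch worth keeping: since $0\notin g\cdot\N_+$ yet $A_0=\Z\neq\emptyset$, part~1 is literally false at $t=0$ and must be read for $t\in\N_+$, as you say.
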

Therefore the interesting cases are whenever $\card{P}\geq 2$ and
$\gcd P=1$.
We will assume this for the rest of the paper without always
explicitly mentioning it again.

\begin{fact}
  \label{fct:odd-period}
  If $\card{P}\geq 2$ and $\gcd P=1$ then there is at least one
  \emph{odd} number $p$ that can be used as a period.
\end{fact}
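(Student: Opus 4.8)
The plan is to establish the \emph{contrapositive}: I will show that if $P$ contained no odd number at all, then $\gcd P$ could not be $1$. This isolates $\gcd P = 1$ as the hypothesis that does all the work and reduces the claim to a single divisibility observation.

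First I would assume, towards a contradiction, that every $p \in P$ is even. Then $2$ divides every element of $P$, so $2$ is a common divisor of the whole set; by the very definition of the greatest common divisor this forces $2 \mid \gcd P$. Combined with the standing hypothesis $\gcd P = 1$ this would yield $2 \mid 1$, which is false. Hence the assumption is untenable, and at least one $p \in P$ must be odd.

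The hard part is essentially nonexistent: the statement is a direct consequence of the meaning of $\gcd$, and no properties of \msca{} beyond the arithmetic of $P$ are used. The only point worth flagging is that the additional hypothesis $\card{P} \geq 2$ is not actually required for the conclusion — even a singleton set whose $\gcd$ equals $1$ must be $\{1\}$, which already contains an odd number — so the argument ultimately rests solely on $\gcd P = 1$. I would present it as a short remark that the two-element assumption is inherited from the running convention fixed just above rather than being needed here.
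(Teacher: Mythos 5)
Your proof is correct and matches the (trivial) argument the paper leaves implicit, since the paper states this as a fact without proof: if all periods were even, $2$ would divide $\gcd P$, contradicting $\gcd P = 1$. Your side remark is also accurate --- the hypothesis $\card{P}\geq 2$ is not needed for this particular conclusion and is merely part of the paper's standing assumption that the interesting cases have $\card{P}\geq 2$ and $\gcd P = 1$.
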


% --------------------------------------------------------------------
\subsection{Signals in \msca}
\label{subs:signals}

Since almost all CA algorithms for the synchronization problem make
extensive use of signals, they are also our first example for some
\msca.
Figure~\ref{fig:signal-msca} shows a sketch of a space-time diagram.
Time is increasing in the downward direction (throughout this paper).
When a cell is active at time $t$ a triangle between the old state
in row $t$ and the new state in row $t+1$ indicates the state
transition.
The numbers in the top row are the periods of the cells.

At this point it is not important to understand how an appropriate
local transition function could be designed to realize the depicted
signal.
But, assuming that this can be done, the example has been chosen
such that the signal is present in a cell $i$ for the first time
exactly $p_i$ steps after it was first present in the left neighbor
$i-1$.

\begin{figure}[htb]
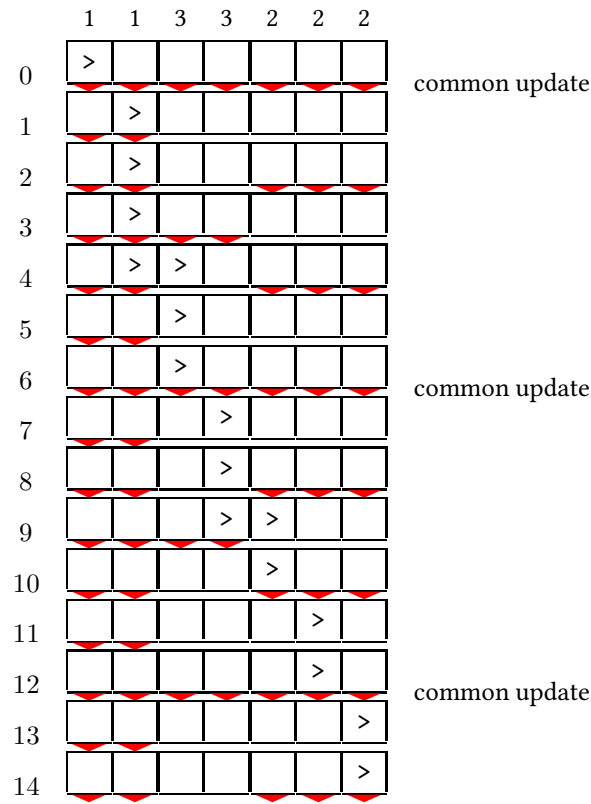

  \centering
  \small
  \setlength{\tabcolsep}{0pt}
  \renewcommand{\arraystretch}{0}
  \begin{tabular}[t]{C@{\quad}*{7}{c}@{\quad}l}
       & 1  & 1 & 3 & 3 & 2 & 2 & 2 \\[2mm]
    0  &\cR &\cE &\cE &\cE &\cE &\cE &\cE & \\
       &\ac &\ac &\ac &\ac &\ac &\ac &\ac & \multirow{2}{*}[2ex]{common update} \\
    1  &\cE &\cR &\cE &\cE &\cE &\cE &\cE & \\
       &\ac &\ac &    &    &    &    &    & \\
    2  &\cE &\cR &\cE &\cE &\cE &\cE &\cE & \\
       &\ac &\ac &    &    &\ac &\ac &\ac & \\
    3  &\cE &\cR &\cE &\cE &\cE &\cE &\cE & \\
       &\ac &\ac &\ac &\ac &    &    &    & \\
    4  &\cE &\cR &\cR &\cE &\cE &\cE &\cE & \\
       &\ac &\ac &    &    &\ac &\ac &\ac & \\
    5  &\cE &\cE &\cR &\cE &\cE &\cE &\cE & \\
       &\ac &\ac &    &    &    &    &    & \\
    6  &\cE &\cE &\cR &\cE &\cE &\cE &\cE & \\
       &\ac &\ac &\ac &\ac &\ac &\ac &\ac & \multirow{2}{*}[2ex]{common update} \\
    7  &\cE &\cE &\cE &\cR &\cE &\cE &\cE & \\
       &\ac &\ac &    &    &    &    &    & \\
    8  &\cE &\cE &\cE &\cR &\cE &\cE &\cE & \\
       &\ac &\ac &    &    &\ac &\ac &\ac & \\
    9  &\cE &\cE &\cE &\cR &\cR &\cE &\cE & \\
       &\ac &\ac &\ac &\ac &    &    &    & \\
    10 &\cE &\cE &\cE &\cE &\cR &\cE &\cE & \\
       &\ac &\ac &    &    &\ac &\ac &\ac & \\
    11 &\cE &\cE &\cE &\cE &\cE &\cR &\cE & \\
       &\ac &\ac &    &    &    &    &    & \\
    12 &\cE &\cE &\cE &\cE &\cE &\cR &\cE & \\
       &\ac &\ac &\ac &\ac &\ac &\ac &\ac & \multirow{2}{*}[2ex]{common update} \\
    13 &\cE &\cE &\cE &\cE &\cE &\cE &\cR & \\
       &\ac &\ac &    &    &    &    &    & \\
    14 &\cE &\cE &\cE &\cE &\cE &\cE &\cR & \\
       &\ac &\ac &    &    &\ac &\ac &\ac & \\
  \end{tabular}
  \caption{Sketch of how a basic signal could move right in a
    \msca as fast as possible.
    The numbers in the top row are the periods of the cells.
    Time is going down.
    Triangles indicate active state transitions, \ie when they are
    missing the state has to stay the same. }
  \label{fig:signal-msca}
\end{figure}

One possibility to construct such computations is the following.
Putting aside an appropriate number of cells at either end, the
configuration consists of blocks of cells.
In each block all cells have some common period $p$ and there are
$k=\pcm/p$ cells in the block.
For example, in Figure~\ref{fig:signal-msca} $\pcm=6$ (at least one
can assume that, since only periods $1$, $2$, and $3$ are used) and
there are $2$ cells with period $3$ and $3$ cells with period $2$.
Let's number the cells in such a block from $1$ to $k$.

Assume that a signal should move to the right as fast as posible.
For each such block the following holds:
If the signal appears in the left neighbor of such a block for the
first time after a common update at some time $t$, then it can only
enter cell $1$ of the block at time $t+p$, cell $2$ of the block at
time $t+2p$ \etc, and hence the last cell $k$ of the block at time
$t+kp=t+\pcm$.
That cell is the left neighbor of the next block.
Hence by induction the same is true for every block and the passage of
the signal through each block will take $\pcm$ steps.
This happens to be the sum of all periods of cells in one block.

Unfortunately there are also cases in which signals are \emph{not}
delayed by the increase of the periods of some cells.
Figure~\ref{fig:signal-not-slow} shows a situation where periods $1$
and $2$ are assigned to subsequent cells alternatingly.
As can be seen, a signal can move from each cell to next one in every
step, including a change of direction at the right border.

\begin{figure}[htb]
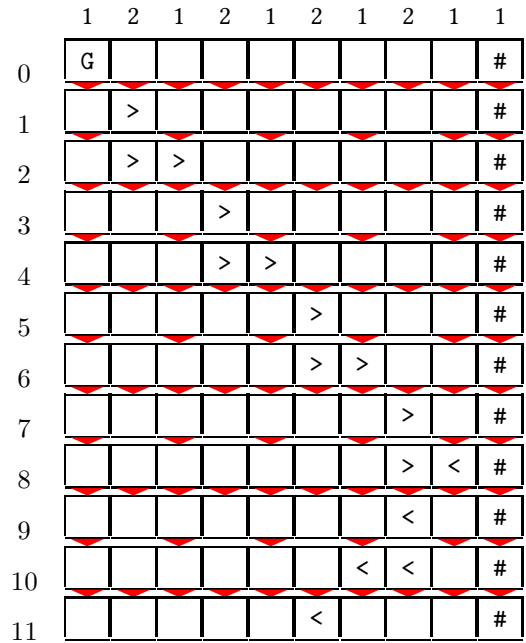

  \centering
  \small
  \setlength{\tabcolsep}{0pt}
  \renewcommand{\arraystretch}{0}
  \begin{tabular}{C@{\quad}*{10}{c}@{\quad}l}
      & 1  & 2  & 1  & 2  & 1  & 2  & 1  & 2  & 1  & 1   \\[2mm]
    0 &\cG &\cE &\cE &\cE &\cE &\cE &\cE &\cE &\cE &\cH  \\
      &\ac &\ac &\ac &\ac &\ac &\ac &\ac &\ac &\ac &\ac  \\
    1 &\cE &\cR &\cE &\cE &\cE &\cE &\cE &\cE &\cE &\cH  \\
      &\ac &    &\ac &    &\ac &    &\ac &    &\ac &\ac  \\
    2 &\cE &\cR &\cR &\cE &\cE &\cE &\cE &\cE &\cE &\cH  \\
      &\ac &\ac &\ac &\ac &\ac &\ac &\ac &\ac &\ac &\ac  \\
    3 &\cE &\cE &\cE &\cR &\cE &\cE &\cE &\cE &\cE &\cH  \\
      &\ac &    &\ac &    &\ac &    &\ac &    &\ac &\ac  \\
    4 &\cE &\cE &\cE &\cR &\cR &\cE &\cE &\cE &\cE &\cH  \\
      &\ac &\ac &\ac &\ac &\ac &\ac &\ac &\ac &\ac &\ac  \\
    5 &\cE &\cE &\cE &\cE &\cE &\cR &\cE &\cE &\cE &\cH  \\
      &\ac &    &\ac &    &\ac &    &\ac &    &\ac &\ac  \\
    6 &\cE &\cE &\cE &\cE &\cE &\cR &\cR &\cE &\cE &\cH  \\
      &\ac &\ac &\ac &\ac &\ac &\ac &\ac &\ac &\ac &\ac  \\
    7 &\cE &\cE &\cE &\cE &\cE &\cE &\cE &\cR &\cE &\cH  \\
      &\ac &    &\ac &    &\ac &    &\ac &    &\ac &\ac  \\
    8 &\cE &\cE &\cE &\cE &\cE &\cE &\cE &\cR &\cL &\cH  \\
      &\ac &\ac &\ac &\ac &\ac &\ac &\ac &\ac &\ac &\ac  \\
    9 &\cE &\cE &\cE &\cE &\cE &\cE &\cE &\cL &\cE &\cH  \\
      &\ac &    &\ac &    &\ac &    &\ac &    &\ac &\ac  \\
   10 &\cE &\cE &\cE &\cE &\cE &\cE &\cL &\cL &\cE &\cH  \\
      &\ac &\ac &\ac &\ac &\ac &\ac &\ac &\ac &\ac &\ac  \\
   11 &\cE &\cE &\cE &\cE &\cE &\cL &\cE &\cE &\cE &\cH  \\
%      &\ac &    &\ac &    &\ac &    &\ac &    &\ac &\ac  \\
  \end{tabular}
  \caption{Basic signal first moving right and bouncing back at the
    border with speed $1$ although half of the cells has only period
    $2$.}
  \label{fig:signal-not-slow}
\end{figure}

% --------------------------------------------------------------------
\subsection{The FSSP in \msca}
\label{subs:fssp-msca}

In the standard setting for each $n$ there is exactly one problem
instance of the FSSP of length $n$.

In \msca{} we will assume that the set of states is always of the form
$S=\Per\x S'$.
We will call a configuration $c\in S^{\Z}$ a \emph{problem instance
  for the \msfssp} if two conditions are satisfied:
\begin{itemize}
\item $\pi_s(c)$ is a problem instance for the FSSP in standard CA.
\item The period of all border cells is the same as that of \sG-cell
  $1$.
\end{itemize}
By definition border cells never change their state, no matter what
their period is.
The second condition just makes sure that formally a period is
assigned even to border cells, but this does not change the set of
periods that are present in the cells of the support that do the real
work.

Now for each size $n$ there are $\card{\Per}^n$ problem instances of
the \msfssp.

It should be clear that the minimum synchronization time will at least
in some cases depend on the periods.
Assume that there are two different $p, q\in \Per$, say $p<q$.
Then, when all cells have $p_i=p$ synchronization can be
achieved more quickly than when all cells have $p_i=q$.
A straightforward transfer of a (time optimal) FSSP algorithm for
standard CA (needing $2n+O(1)$ steps) yields a \msca{} running in
time $(2n-2)p$.
This is faster than any \msca{} with uniform period $q$ can
be which needs $(2n-2)q$ (see Section~\ref{s:lower-bound}).

% --------------------------------------------------------------------
\section{On lower bounds for the synchronization time on \msca}
\label{s:lower-bound}

In the case of standard CA the argument used for deriving lower bounds
for the synchronization time uses the following observation.
Whenever an algorithm makes the leftmost cell $1$ fire at some time
$t$, it can only be correct if changing the border state $\sB$ in cell
$n+1$ to state $\sS$ (\ie increasing the size of the initial
configuration by $1$) can possibly (and in fact will) have an
influence on the state of cell $1$ at time $t$.
If $t\leq 2n-3$ changing the state at the right end cannot have an
influence on cell $1$.
But then adding $n$ cells in state $\sS$ to the right will still make
cell $1$ enter state $\sF$ at time $t$, while the now rightmost cell
$2n$ will not have had any chance to leave its state.

This argument can of course be transferred to \msca, and it means that
one has to find out the minimum time to send a signal to the rightmost
cell of the support and back to cell $1$.

\begin{theorem}
  \label{thm:lower-bound}
  For every \msca{} solving the \msfssp{} there are constants $a>1$
  and $d$ such that for infinitely many $n\geq 2$ there are at least
  $a^n$ problem instances $c$ of size $n$ such that $C$ needs at
  least
  \begin{equation*}
    \left( 2\sum_{i=1}^{n}  p_i \right) - d
  \end{equation*}
  steps for synchronization of $c$.
\end{theorem}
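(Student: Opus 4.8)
The plan is to prove an algorithm-independent round-trip lower bound from causality, and then to exhibit exponentially many period assignments for which this round trip is as slow as $2\sum_i p_i$ in \emph{both} directions.

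First I would make the \enquote{signal out and back} argument precise. For a fixed assignment of periods define the forward cone by $\tau_1^{\rightarrow}=0$ and $\tau_{i+1}^{\rightarrow}=\lceil \tau_i^{\rightarrow}/p_{i+1}\rceil\,p_{i+1}+1$. Since cell $i+1$ reads cell $i$ only at its active times (the multiples of $p_{i+1}$), $\tau_i^{\rightarrow}$ is exactly the earliest step at which the state of cell $i$ can depend on the initial state of cell $1$. Transferring the standard FSSP argument, I would compare $I_n$ with the instance obtained by replacing the right border cell $n+1$ by state \sS: using $f(\sS,\sS,\sS)=\sS$ and $f(\sS,\sS,\sB)=\sS$, an induction on $t$ shows that the cells around the right border stay quiescent and behave identically in both instances until the forward cone reaches cell $n$. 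Hence the difference can be \enquote{born} at cell $n$ no earlier than $\tau_n^{\rightarrow}$, and transporting it back to cell $1$ is governed by the backward cone $\sigma_n=\tau_n^{\rightarrow}$, $\sigma_{i-1}=\lceil \sigma_i/p_{i-1}\rceil\,p_{i-1}+1$; every correct algorithm therefore needs at least $\sigma_1$ steps.

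Next I would identify the slow assignments. Fix $P'\subseteq\Per$ with $\gcd P'=1$ and $\card{P'}\geq 2$ (possible because $\gcd\Per=1$) and set $L=\lcm P'$. Call a maximal run of equal-period cells a \emph{block}, and \emph{resonant} if a cell of period $p$ occurs exactly $L/p$ times. The one computation that matters is the invariant \enquote{arrival time $\equiv 1\pmod{L}$}: a resonant block entered at such a time delays the cone by the full $p$ at every cell (because then the arrival time is $\equiv 1\pmod{p}$, the worst case) and is crossed in exactly $L$ steps, so it is left again at a time $\equiv 1\pmod{L}$. As $L$ is a multiple of every period, this phase is preserved across block boundaries and, by the same computation, in the backward direction as well. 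A constant-size prefix (the general followed by one cell) already brings the cone to a time $\equiv 1\pmod{L}$, so a configuration built from resonant blocks has $\tau_n^{\rightarrow}=\sum_i p_i-O(1)$ with $\tau_n^{\rightarrow}\equiv 1\pmod{L}$; the backward cone is then equally unlucky and $\sigma_1=2\sum_i p_i-O(1)$. With the first paragraph this gives the bound $2\sum_i p_i-d$ for a single constant $d$.

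For the count I would group the resonant blocks into \emph{super-blocks}, each made of one resonant block per period of $P'$ concatenated in an arbitrary order. Every super-block has the same length $s=\sum_{p\in P'}L/p$ and, by the invariant above, is slow independently of its internal order; since $\card{P'}\geq 2$ there are at least two admissible orders. A size-$n$ instance assembled from the fixed prefix and $\lfloor(n-O(1))/s\rfloor$ super-blocks can thus be chosen in at least $2^{\lfloor(n-O(1))/s\rfloor}\geq a^n$ ways for $a=2^{1/(2s)}>1$; all of them use exactly the periods of $P'$, so $\gcd P=1$, and all obey the round-trip bound. This works for every sufficiently large $n$ in a fixed residue class modulo $s$, hence for infinitely many sizes.

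I expect the first paragraph to be the real obstacle. The delicate point is that the quiescence axioms are precisely what forbids the right border from acting on the interior before the general's cone has arrived; they are what keeps the factor $2$, since dropping them would leave only a lower bound of about $\sum_i p_i$. Once the invariant \enquote{$\equiv 1\pmod{L}$} is isolated the block computation is elementary and the counting is immediate.
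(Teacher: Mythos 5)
Your overall route is the same as the paper's: a round-trip causality bound, instances built from blocks of $L/p$ consecutive cells of period $p$ that are traversed in exactly $L$ steps under the phase invariant \enquote{arrival time $\equiv 1 \pmod L$}, and an exponential count of block arrangements (the paper uses $m$ blocks of each of two lengths and counts $\binom{2m}{m}$ orders; your super-blocks give $2^m$, which is just as good). Your first paragraph is actually more careful than the paper, which only asserts that the standard FSSP argument \enquote{can of course be transferred}; the recurrence $\tau_{i+1}^{\rightarrow}=\lceil \tau_i^{\rightarrow}/p_{i+1}\rceil p_{i+1}+1$ together with the quiescence induction is the right way to make that precise, and your observation that residue $1$ modulo $p$ is the worst case for a period-$p$ cell is exactly the mechanism the paper exploits.

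The gap is at the turnaround, and it is precisely the point where the paper invests most of its construction. You claim the invariant is preserved \enquote{by the same computation, in the backward direction as well}, so that the backward cone is \enquote{equally unlucky}. That is false as stated. The forward invariant gives $\sigma_n=\tau_n^{\rightarrow}\equiv 1\pmod L$ at the \emph{last} cell of the final block, whereas the mirrored invariant for the return trip would require this phase at the \emph{right neighbor} of a block. The remaining cells of the final block (period $p$) are indeed crossed with delay $p$ each, but the cone then exits that block at a time $\equiv 1+(L/p-1)p\equiv 1-p\pmod L$; the next block, of period $p'$, is therefore entered at residue $(1-p)\bmod p'$, which differs from $1$ unless $p'\mid p$, so its first cell is crossed with delay strictly less than $p'$ and the phase is perturbed further. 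This is exactly why the paper appends a special end segment of $1+h$ cells of period $q$ with $2h+1=b_q$ \emph{odd} (hence Fact~\ref{fct:odd-period} and the careful choice of $M$): the reflection then costs exactly $(2h+1)q=m_c$ steps and restores the phase $\equiv 1 \pmod{m_c}$ for the entire return trip, as shown in Figure~\ref{fig:reflection}. Your construction is in fact salvageable without odd periods, but by an argument you do not give: writing the arrival time at a block entry as $\equiv 1+\delta\pmod L$ with $0\leq\delta<L$, one checks that crossing a block of period $p'$ backward loses exactly $\delta\bmod p'$ steps and replaces $\delta$ by $\delta-(\delta\bmod p')$; since $\delta$ equals $L-p$ just after the turnaround and is nonnegative and non-increasing, the total loss on the way back telescopes to at most $L-1$, a constant depending only on $\Per$, so your bound survives with a slightly larger $d$. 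As written, though, the step \enquote{equally unlucky} is unjustified, and it is the one step in the proof of Theorem~\ref{thm:lower-bound} that does not take care of itself.
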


\begin{proof} %[of Theorem~\ref{thm:lower-bound}]
  The example in Figure~\ref{fig:signal-msca} can be generalized.

  We first define a set $M$ of periods we will make use of.
  According to Fact~\ref{fct:odd-period} the set
  $P'=\{p \in P\mid p \text{ is odd} \}$ is not empty.
  If $\card{P'}\geq 2$ then let $M=P'$ and $q=\max M$.
  If $\card{P'}=1$ then let $M=P'\cup\{q\}$ where
  $q = \max( P\setminus P')$ (choosing the maximum is not important;
  we just want to be concrete).
  Let $m_c=\lcm M$.

  We will use blocks of length $b_p=m_c/p$ of sucessive cells with the
  same period $p$ (as in Figure~\ref{fig:signal-msca}).
  As will be seen it is useful to have at least one odd $b_p$.
  Indeed, $b_q=m_c/q$ is odd (because $q$ is the only possibly even
  number in $M$).

  Since there are at least $2$ different numbers $p\in M$, there are
  also at least $2$ different block lengths in
  $B=\{ b_p \mid p\in M\}$, denoted as $b$ and $b'$.

  Consider now all problem instances similar to
  Figure~\ref{fig:signal-msca}.
  Let $m\in \N_+$
  \begin{itemize}
  \item The periods of the first $2$ cells are the same but otherwise
    arbitrary.
  \item The rest of the cells is partitioned into $2m$ blocks.
    There are $m$ blocks consisting of $b$ cells and the period of all
    cells in that block is $m_c/b$ and there are $m$ blocks consisting
    of $b'$ cells and the period of all cells in that block is
    $m_c/b'$.
  \item At the right end of the instance there is a segment consisting
    of $1+h$ cells of period $q$ where $h=\lfloor b_q/2\rfloor$.
    Since $b_q$ is odd, one has $2h+1=b_q$.
  \end{itemize}

  For each $m$ the total size of the problem instances is
  $2+m(b+b')+1+h$ which is linear in $m$, and there are $\binom{2m}{m}$
  different arrangements of the blocks.
  This number is known to be larger than $4^m/(2m+1)$ (proof by induction).
  Formulated the other way around for these problem sizes $n$ there is
  a number of problem instances which exponential in $m$ and hence
  also in $n$ (for some appropriately chosen base $a>1$).

  It remains to estimate the synchronization time for these problem
  instances.
  As already described in Subsection~\ref{subs:signals} a signal that
  is supposed to first move to the right border as fast as possible
  and then back to cell $1$ will arrive in cell $2$ after the first
  (common) update.
  From that time on for each block it will take exactly $m_c$ steps to
  \enquote{traverse} each block which is also the sum of all periods
  of the cells in the block.

  For the passage through the last $h+1$ cells forth and back have a
  look at Figure~\ref{fig:reflection}.
  Each cell is passed twice, once when the signal moves right and once
  when it moves back to the left, each time for $q$ steps.
  The only exception is the rightmost cell, where the signal stays
  only for the duration of $1$ period, \ie $q$ steps.
  Altogether these are $(2h+1) q=m_c$ steps which is exactly the same
  number of steps as for each full block to the left.
  Consequently the position of the signal moving back to the left is
  for the first time in the cell to the right of a full block
  immediatly after a common update.
  
  \begin{figure}[!h]
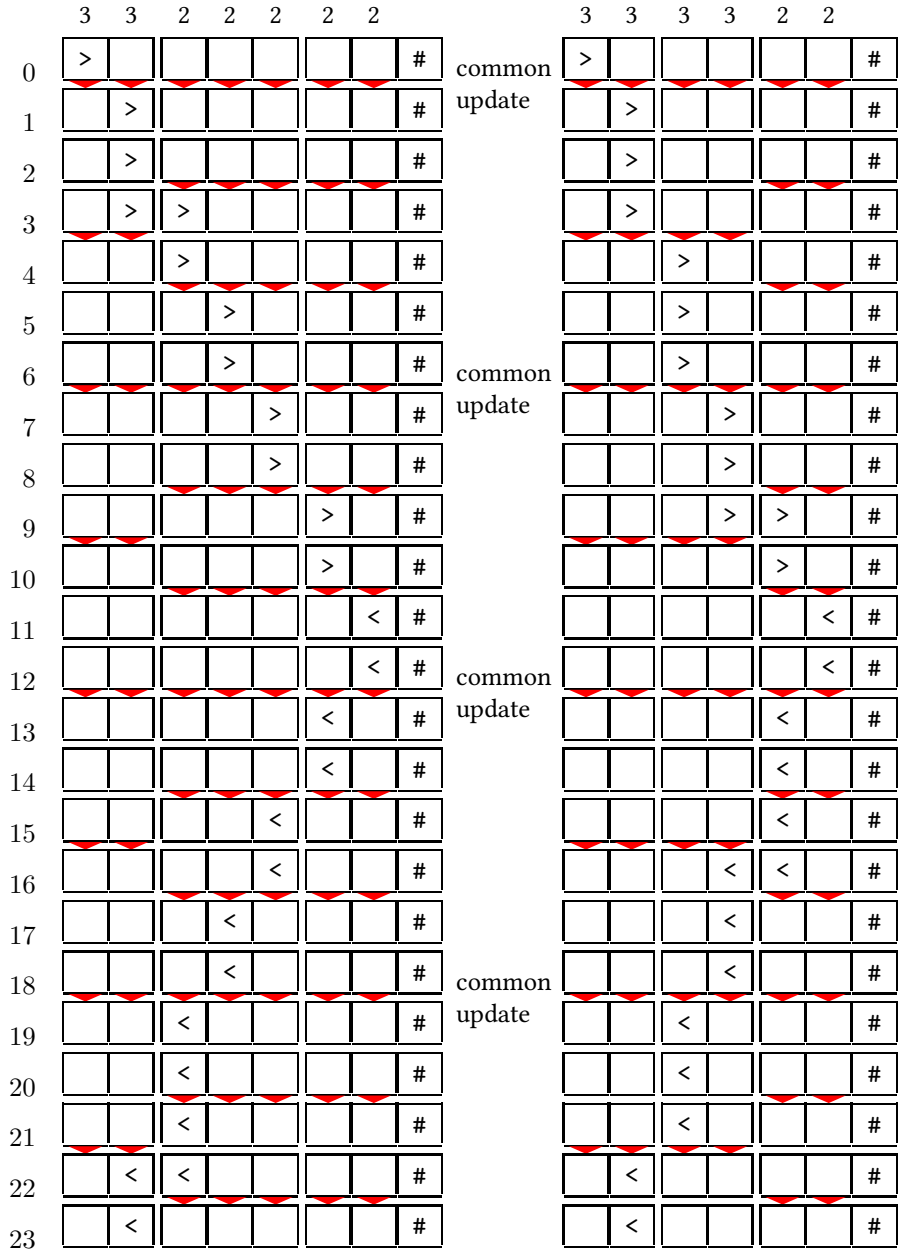

    \centering
    \small
    \setlength{\tabcolsep}{0pt}
    \renewcommand{\arraystretch}{0}
    \begin{tabular}[t]{R@{\quad}cc@{ }ccc@{ }ccc@{ }c@{ }cc@{ }cc@{ }cccc}
       & 3  & 3  & 2  & 2  & 2  & 2  & 2  &    &    & 3  & 3  & 3  & 3  & 2  & 2  &    &  \\[2mm]
    0  &\cR &\cE &\cE &\cE &\cE &\cE &\cE &\cH &    &\cR &\cE &\cE &\cE &\cE &\cE &\cH & \\
       &\ac &\ac &\ac &\ac &\ac &\ac &\ac &    &\cu &\ac &\ac &\ac &\ac &\ac &\ac &    & \\
    1  &\cE &\cR &\cE &\cE &\cE &\cE &\cE &\cH &    &\cE &\cR &\cE &\cE &\cE &\cE &\cH & \\
       &    &    &    &    &    &    &    &    &    &    &    &    &    &    &    &    & \nc \\
    2  &\cE &\cR &\cE &\cE &\cE &\cE &\cE &\cH &    &\cE &\cR &\cE &\cE &\cE &\cE &\cH & \\
       &    &    &\ac &\ac &\ac &\ac &\ac &    &    &    &    &    &    &\ac &\ac &    & \\
    3  &\cE &\cR &\cR &\cE &\cE &\cE &\cE &\cH &    &\cE &\cR &\cE &\cE &\cE &\cE &\cH & \\
       &\ac &\ac &    &    &    &    &    &    &    &\ac &\ac &\ac &\ac &    &    &    & \\
    4  &\cE &\cE &\cR &\cE &\cE &\cE &\cE &\cH &    &\cE &\cE &\cR &\cE &\cE &\cE &\cH & \\
       &    &    &\ac &\ac &\ac &\ac &\ac &    &    &    &    &    &    &\ac &\ac &    & \\
    5  &\cE &\cE &\cE &\cR &\cE &\cE &\cE &\cH &    &\cE &\cE &\cR &\cE &\cE &\cE &\cH & \\
       &    &    &    &    &    &    &    &    &    &    &    &    &    &    &    &    & \nc \\
    6  &\cE &\cE &\cE &\cR &\cE &\cE &\cE &\cH &    &\cE &\cE &\cR &\cE &\cE &\cE &\cH & \\
       &\ac &\ac &\ac &\ac &\ac &\ac &\ac &    &\cu &\ac &\ac &\ac &\ac &\ac &\ac &    & \\
    7  &\cE &\cE &\cE &\cE &\cR &\cE &\cE &\cH &    &\cE &\cE &\cE &\cR &\cE &\cE &\cH & \\
       &    &    &    &    &    &    &    &    &    &    &    &    &    &    &    &    & \nc \\
    8  &\cE &\cE &\cE &\cE &\cR &\cE &\cE &\cH &    &\cE &\cE &\cE &\cR &\cE &\cE &\cH & \\
       &    &    &\ac &\ac &\ac &\ac &\ac &    &    &    &    &    &    &\ac &\ac &    & \\
    9  &\cE &\cE &\cE &\cE &\cE &\cR &\cE &\cH &    &\cE &\cE &\cE &\cR &\cR &\cE &\cH & \\
       &\ac &\ac &    &    &    &    &    &    &    &\ac &\ac &\ac &\ac &    &    &    & \\
   10  &\cE &\cE &\cE &\cE &\cE &\cR &\cE &\cH &    &\cE &\cE &\cE &\cE &\cR &\cE &\cH & \\
       &    &    &\ac &\ac &\ac &\ac &\ac &    &    &    &    &    &    &\ac &\ac &    & \\
   11  &\cE &\cE &\cE &\cE &\cE &\cE &\cL &\cH &    &\cE &\cE &\cE &\cE &\cE &\cL &\cH & \\
       &    &    &    &    &    &    &    &    &    &    &    &    &    &    &    &    & \nc \\
   12  &\cE &\cE &\cE &\cE &\cE &\cE &\cL &\cH &    &\cE &\cE &\cE &\cE &\cE &\cL &\cH & \\
       &\ac &\ac &\ac &\ac &\ac &\ac &\ac &    &\cu&\ac &\ac &\ac &\ac &\ac &\ac &    &\\
   13  &\cE &\cE &\cE &\cE &\cE &\cL &\cE &\cH &    &\cE &\cE &\cE &\cE &\cL &\cE &\cH & \\
       &    &    &    &    &    &    &    &    &    &    &    &    &    &    &    &    & \nc \\
   14  &\cE &\cE &\cE &\cE &\cE &\cL &\cE &\cH &    &\cE &\cE &\cE &\cE &\cL &\cE &\cH & \\
       &    &    &\ac &\ac &\ac &\ac &\ac &    &    &    &    &    &    &\ac &\ac &    & \\
   15  &\cE &\cE &\cE &\cE &\cL &\cE &\cE &\cH &    &\cE &\cE &\cE &\cE &\cL &\cE &\cH & \\
       &\ac &\ac &    &    &    &    &    &    &    &\ac &\ac &\ac &\ac &    &    &    & \\
   16  &\cE &\cE &\cE &\cE &\cL &\cE &\cE &\cH &    &\cE &\cE &\cE &\cL &\cL &\cE &\cH & \\
       &    &    &\ac &\ac &\ac &\ac &\ac &    &    &    &    &    &    &\ac &\ac &    & \\
   17  &\cE &\cE &\cE &\cL &\cE &\cE &\cE &\cH &    &\cE &\cE &\cE &\cL &\cE &\cE &\cH & \\
       &    &    &    &    &    &    &    &    &    &    &    &    &    &    &    &    & \nc \\
   18  &\cE &\cE &\cE &\cL &\cE &\cE &\cE &\cH &    &\cE &\cE &\cE &\cL &\cE &\cE &\cH & \\
       &\ac &\ac &\ac &\ac &\ac &\ac &\ac &    &\cu &\ac &\ac &\ac &\ac &\ac &\ac &    & \\
   19  &\cE &\cE &\cL &\cE &\cE &\cE &\cE &\cH &    &\cE &\cE &\cL &\cE &\cE &\cE &\cH & \\
       &    &    &    &    &    &    &    &    &    &    &    &    &    &    &    &    & \nc \\
   20  &\cE &\cE &\cL &\cE &\cE &\cE &\cE &\cH &    &\cE &\cE &\cL &\cE &\cE &\cE &\cH & \\
       &    &    &\ac &\ac &\ac &\ac &\ac &    &    &    &    &    &    &\ac &\ac &    & \\
   21  &\cE &\cE &\cL &\cE &\cE &\cE &\cE &\cH &    &\cE &\cE &\cL &\cE &\cE &\cE &\cH & \\
       &\ac &\ac &    &    &    &    &    &    &    &\ac &\ac &\ac &\ac &    &    &    & \\
   22  &\cE &\cL &\cL &\cE &\cE &\cE &\cE &\cH &    &\cE &\cL &\cE &\cE &\cE &\cE &\cH & \\
       &    &    &\ac &\ac &\ac &\ac &\ac &    &    &    &    &    &    &\ac &\ac &    & \\
   23  &\cE &\cL &\cE &\cE &\cE &\cE &\cE &\cH &    &\cE &\cL &\cE &\cE &\cE &\cE &\cH & \\
       &    &    &    &    &    &    &    &    &    &    &    &    &    &    &    &    & \\
      
    \end{tabular}
    \caption{Reflection of a \enquote{fast} signal at the right border.
      We use $m_c=6$, $q=2$, hence $b_q=3$, $h=1$ and $h+1=2$.
      Therefore at the right end there are always $h+1=2$ cells with period
      $2$.
      We have introduced small spaces to make the boundaries of the last
      full block of cells with period $2$ better visible.
      On the left hand side the last full block has $3$ cells of period $2$
      and on the right hand side the last full block has $2$ cells of period $3$.
      For more details see the proof of Theorem~\ref{thm:lower-bound}.}
    \label{fig:reflection}
  \end{figure}

  %\clearpage
  
  Summing up all terms for the movement of a signal from the very
  first cell to the right border and back results in a time of
  \[
    t=1 +m\cdot m_c + m\cdot m_c + h\cdot q + q + h\cdot q +m\cdot m_c
    + m\cdot m_c+p_2+1
  \]
  This contains twice the term
  \[
    t' = m\cdot m_c + m\cdot m_c + h\cdot q= \sum_{i=3}^{n-1} p_i =
    \left(\sum_{i=1}^{n} p_i\right) -p_1 -p_2 -p_n
  \]
  hence
  \begin{align*}
    t = 1 + t' + p_n + t' +p_2+1
    & = 1 + 2\left(\sum_{i=1}^{n} p_i\right) -2p_1-2p_2-2p_n+p_n+p_2+1\\
    & = 2\left(\sum_{i=1}^{n} p_i\right) -2p_1-p_2-p_n+2\\
  \end{align*}
  as claimed.
  
  It can be observed that in the case that all $p_i=1$ the formula
  becomes the well-known lower bound of  $2n-2$.
\end{proof}

In the following section we will describe an algorithm which achieves
a running time which is slower than the lower bound in
Theorem~\ref{thm:lower-bound} by only a constant summand.

% --------------------------------------------------------------------
\section{Detailed description of the synchronization algorithm for
  \msca}
\label{s:details}

To the best of our knowledge the paper by Manzoni and Umeo
\cite{Manzoni_2014_FSSP_ar} is the only work on the FSSP in
one-dimensional \msca{} until now.
They describe an algorithm which achieves synchronization in time
$n\cdot \pmax$ where
$\pmax = \max \{\pi_{p}(c(i)) \mid 1\leq i\leq n\}$ is maximum
period used by some cell in the initial configuration $c$.

Below we will describe an algorithm which proofs the following:

\begin{theorem}
  \label{thm:upper-bound}
  For each $P$ there is a constant $d$ and an algorithm which
  synchronizes each \msfssp{} instance $c$ of size $n$ with periods
  $p_1, \dots, p_n \in P$ in time
  \begin{equation}
    \left( 2\sum_{i=1}^{n}  p_i \right) + d
    \label{eq:upper-bound}
  \end{equation}
\end{theorem}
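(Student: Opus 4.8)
The plan is to work in the \emph{cumulative-period coordinate} $x(i)=\sum_{k=1}^{i}p_k$ and to transport a known time-optimal FSSP construction for standard CA into this coordinate. As the discussion around Figure~\ref{fig:signal-msca} shows, a fastest rightward signal needs exactly $p_i$ real steps to cross cell $i$; hence in the $(x,t)$-plane every such signal has slope $\pm 1$, and the space--time diagram of the \msca{} on the support $\{1,\dots,n\}$ behaves like that of a standard CA on a line of $L=\sum_{i=1}^{n}p_i$ ``virtual'' cells. If we can realize, inside the \msca, the whole signal repertoire of an optimal standard solution in these coordinates, then the firing time becomes the round-trip time $2L$ plus lower-order corrections. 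A crucial and repeatedly used fact is that $\Per$ is finite, so $\pmax=\max\Per$ and $\pcm=\lcm\Per$ are constants; consequently each cell has room in its (finite) state to store its own period together with a constant number of bounded counters.

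First I would realize the basic signal types. A speed-$1$ signal simply propagates at the update rate of the cells it passes, spending $p_i$ steps in cell $i$; this is the behaviour already sketched in Figure~\ref{fig:signal-msca}. A signal of any fixed rational speed (the slow $1/3$ signal, and the further $1/7,1/15,\dots$ signals used for the recursive subdivision) is obtained by letting a cell count a fixed number of its \emph{own} update cycles before forwarding the signal; since the speeds are constants this needs only $O(1)$ states. Reflection of the fast signal at the right border and the collision of a returning fast signal with a slow signal (which spawns a new general) are purely local events and are handled by the transition function exactly as in the standard construction. The leftmost cell carrying the general and the rightmost cell where the fast signal reflects are treated by a fixed, instance-independent end protocol, and segments whose cumulative width has shrunk to at most a constant are synchronized by brute force in $O(1)$ steps, contributing only to the additive constant $d$.

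It then remains to bound the total time. The original fast signal reaches the right border and returns in $2L+O(\pcm)$ steps, and aligning the final firing transition to the next common update makes all cells enter state \sF{} simultaneously at the cost of at most $\pcm$ further steps. The main obstacle is the \emph{quantization of positions}: unlike in a standard CA, a cell occupies an interval of width $p_i>1$ in the $x$-coordinate, so the subdivision points of the recursion (the cumulative midpoints where new generals are born) cannot be hit exactly but only up to one cell, i.e.\ up to $\pmax$. Two things must be shown. First, that this inaccuracy still permits simultaneous firing; this is precisely what the alignment to the common update takes care of. Second, and this is the heart of the argument, that the $O(\pmax)$ error incurred at each of the $\Theta(\log n)$ recursion levels does \emph{not} accumulate into an $O(\log n)$ term. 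The cleanest way I would argue this is to read the whole scheme as a real-time simulation of the \emph{synchronous} standard FSSP on the $L$-cell virtual line: the only discrepancy between real time and virtual time is the local phase offset of a cell together with the within-cell progress of a signal, and both are bounded by $\pcm$ \emph{uniformly in $n$} and in the recursion depth. Since the simulated standard automaton fires synchronously at virtual time $2L-2$ and the real-to-virtual time offset never exceeds this uniform bound, every cell fires in real time at most $2L+d$ with a constant $d$ depending only on $\Per$. Making this depth-independent bound precise---equivalently, showing that the positional perturbations are not amplified as the subdivision proceeds to finer scales---is the step I expect to require the most care.
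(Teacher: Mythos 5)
You have correctly identified the right coordinate change (the virtual line of $L=\sum_{i=1}^n p_i$ cells, one v-cell per unit of period) and the right target (a real-time simulation of an optimal standard FSSP solution on that line). But your proof has a genuine gap at exactly the point you yourself flag: the assertion that the real-to-virtual time offset is bounded by $\pcm$ \emph{uniformly} in $n$ and in the recursion depth is not argued at all --- it is precisely the content that needs a construction and a lemma, and it is where the paper's actual work lies. Your first route (re-implementing the fast/slow signal repertoire with per-cell counters) runs into more than the per-level $O(\pmax)$ midpoint quantization you mention: a slow signal hopping through $m$ host cells picks up a phase-rounding of up to $\pmax-1$ steps \emph{per hop}, since a cell only notices an arriving signal at its next own activation; without an explicit mechanism tying forwarding times to the global phase (the $T$-counter modulo $\tc$), these roundings can accumulate linearly along the line, not just logarithmically across recursion levels. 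Your proposed escape --- ``read the whole scheme as a real-time simulation'' --- then simply presupposes that a lag-bounded simulation exists, which is circular as written.

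The paper closes this gap with a mechanism your sketch does not contain. Algorithm~\ref{alg:msfssp} never re-implements signals at all: it simulates an arbitrary optimal standard solution as a black box, in cycles of length $\tc=\lcm\Per$ anchored at common updates. During a cycle, each active host cell greedily accumulates windows of neighboring v-states ($x \leftarrow \sfx_{\tc}(x_{-1}y_{-1})$, $z \leftarrow \pfx_{\tc}(y_1 z_1)$), and Lemma~\ref{lem:fast-collect} shows by induction over activations that by the last activation before the next common update every cell holds $\tc$ v-states on each side --- the full dependence cone needed to advance its own $p$ v-cells by $\tc$ virtual steps in one jump. That is what makes the simulation exactly real-time on average ($\tc$ virtual steps per $\tc$ real steps) with slack only $O(\tc)$ at the end, and it sidesteps every quantization and error-amplification issue your recursion-based route must fight, because correctness of the subdivision scheme is inherited wholesale from the simulated standard CA. It also resolves simultaneity of firing cleanly: the virtual $\sF$ is detected at the last activation of a cycle and all host cells fire at the following common update. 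To repair your proof you would essentially have to invent this gathering lemma (or an equivalent lag-bounded simulation), so as it stands the proposal states the theorem's reduction correctly but omits its proof's core.
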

In the case of standard CA all $p_i=1$ and
formula~(\ref{eq:upper-bound}) becomes $2n+d$ which is only a
constant number of $d+2$ steps slower than the fastest algorithms
possible.

%--------------------------------------------------------------------
\subsection{Core idea for synchronization}
\label{subs:core-idea}

In the proof of a lower bound above we have constructed problem
instances consisting of blocks consisting of cells with identical
period.
The arrangement was chosen in such a way that a signal, even if it
were to move as fast as possible, would have to spend $p$ steps in a
cell with period $p$ before moving on.
In a standard CA this is the time a signal with speed $1$ needs to
move across $p$ cells.
Which leads to the idea to have each cell with period $p$ of the
\msca{} simulate $p$ cells of a standard CA (solving the FSSP).
We'll call the simulated cells \emph{virtual cells} or \emph{v-cells}
for short, and where disambiguation seems important call the cells of
the \msca{} \emph{host cells}.
States of v-cells will be called v-states.

%--------------------------------------------------------------------
\subsection{Details of the synchronization algorithm}
\label{subs:details}

From now on, assume that we are given some standard CA for the
standard FSSP.
Its set of states will be denoted as $Q$.

\begin{algorithm}
  \label{alg:msfssp}
  As a first step we will sketch the components of the set $S$ of states
  of the \msca.

  We already mentioned in Subsection~\ref{subs:fssp-msca} that we assume
  $S$ to be of the form $S=\Per\x S_1$.
  Let $\tc=\lcm\Per$ denote the least common multiple of all
  $p\in \Per$.
  Since in the algorithm below host cells will have to count from $0$ up
  to $\tc-1$, we require that the set of states always contains a
  component $T=\{i\in \Z\mid 0\leq i< \tc\}$.
  Hence $S=\Per\x T \x S_2$, and we assume that the transition function
  will in each step update the $T$-component of a cell by incrementing
  it by its period, modulo $\tc$.
  Imagine that this is always \enquote{the last part} of a transition
  whenever a cell is active.
  Thus an active cell can identify the common updates by the fact that
  its $T$-component is $0$.
  But of course it is equally easy for an active cell to identify an
  activation that is \emph{the last before} a common update.

  Next, each host cell will have to store the states of some v-cells.
  As will be seen this will not only comprise the states of the $p$
  v-cells it is going to simulate, but also the states of v-cells
  simulated by neighboring host cells; we will call these
  \emph{neighboring v-cells}.
  To this end we choose
  $S=\Per\x T \x Q^{\leq\tc} \x Q^{\leq\pmax} \x Q^{\leq\tc} \x S_3$.
  We will denote the newly introduced components of a cell as $x$, $y$,
  and $z$.
  In $x$ a host cell will accumulate the states of more and more
  neighboring v-cells from the left.
  Analogously, in $z$ a host cell will accumulate the states of more and
  more neighboring v-cells from the right.
  In the middle component $y$ a host cell will always store the states
  of the $p$ v-cells it has to simulate itself.

  The simulation will run in cycles each of which is $\tc$ steps long
  and begins with a common update.
  During one cycle a cell with period $p$ will be active $\tc/p$ times.
  Whenever a host cell is active it collects as many neighboring
  v-states as possible, but at most $\tc$ from either side.
  More precisely this is done as depicted in the following table: \\[-1mm]

  \begin{center}
    \begin{tabular}{|C|C|C|}
      \hline
      (x_{-1},y_{-1},z_{-1}) & (x_{0},y_{0},z_{0})  &(x_{1},y_{1},z_{1}) \bigstrut  \\
      \hline
      \multicolumn{1}{c}{ } & \multicolumn{1}{c}{ }  &  \multicolumn{1}{c}{ }  \\[-2mm]
      \cline{2-2}
      \multicolumn{1}{c|}{ } & (\sfx_{\tc}(x_{-1}y_{-1}) ,y_{0}, \pfx_{\tc}(y_{1}z_1)  &  \multicolumn{1}{c}{ } \bigstrut \\
      \cline{2-2}
    \end{tabular}\\[1mm]
  \end{center}

  In other words, the states of the own v-cells are not changed, but
  more and more neighboring v-states are being collected.
  We will show in Lemma~\ref{lem:fast-collect} below that during the
  last activation of a cycle, \ie the last activation before a common
  update, after having collected neighboring v-states, the length of
  the $x$ and $z$ components are in fact $\tc$ and not shorter.
  It is therefore now possible for each host cell to replace the
  v-states of its $p$ v-cells by the v-states those v-cells would be in
  after $\tc$ steps.
  The $x$ and $z$ components are reset to the empty word.

  It is during the last activation of a cycle that a host will compute
  state $\sF$ for each of its v-cells.
  The immediately following activation is a common update for all host
  cells.
  They will simultaneously detect that their v-cells reached the
  \enquote{virtual \sF} and all enter the \enquote{real firing state}.
\end{algorithm}

For a proof of Theorem~\ref{thm:upper-bound} only the following two
aspects remain to be considered.

\begin{lemma}
  \label{lem:fast-collect}
  After one cycle of algorithm~\ref{alg:msfssp} each host cell will
  have collected the states of $\tc$ neighboring v-cells, to the left
  and to the right.
\end{lemma}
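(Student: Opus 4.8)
The plan is to prove the claim for the left component $x$ only; the argument for the right component $z$ is symmetric, reading the $\pfx_\tc$ rule in place of the $\sfx_\tc$ rule. Throughout I write $\ell_i(t)$ for the length of the $x$-component of host cell $i$ after $t$ steps of the cycle, so that $\ell_i(0)=0$ (the reset leaves $x$ empty at the start of the cycle) and the task becomes to show that $\ell_i=\tc$ at cell $i$'s last activation of the cycle, which occurs at time $\tc-p_i$.

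First I would turn the update in the table of Algorithm~\ref{alg:msfssp} into a recurrence. Since $x_i\leftarrow\sfx_{\tc}(x_{i-1}y_{i-1})$ is applied exactly when $i$ is active, and since $|y_{i-1}|=p_{i-1}\le\pmax\le\tc$, one activation of cell $i$ always copies the entire $y_{i-1}$ together with a suffix of $x_{i-1}$, giving $\ell_i(t+1)=\min(\tc,\ell_{i-1}(t)+p_{i-1})$ for active $i$ and $\ell_i(t+1)=\ell_i(t)$ otherwise. Tracking instead the leftmost collected v-cell, the information held in $x_i$ always forms a contiguous block of v-cells ending at the left edge of cell $i$; I would record that a v-cell at distance at most $\tc$ from a cell $j\le i$ is never discarded by the $\sfx_{\tc}$ truncation, so once such a v-cell has entered some $x_j$ it remains available for relaying further to the right.

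The heart of the proof is to show that the information about every v-cell lying within distance $\tc$ to the left of cell $i$ reaches $x_i$ in time. I would do this by identifying the rightward relay of v-cell information with the maximal-speed signal analysed in Subsection~\ref{subs:signals}: each such v-cell is present in its home cell from the common update that opens the cycle, and it is passed one host cell to the right at each activation of the receiving cell, exactly as depicted in Figure~\ref{fig:signal-msca}. From that analysis, a fast signal present at a cell right after a common update needs exactly $\tc$ steps to traverse a full block of $\tc$ v-cells and arrives at the far end again right after a common update. Applying this block by block, by induction on the blocks crossed, I would conclude that after the $\tc$ steps of one cycle the relay has swept the whole window of $\tc$ v-cells on the left of cell $i$.

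The main obstacle is the alignment with the common-update clock: the statement does not merely claim $\ell_i=\tc$ after $\tc$ steps, but already at cell $i$'s \emph{last} activation, which for a cell of period $p_i$ is the step $\tc-p_i$, strictly before the closing common update. I would address this by combining the block-crossing timing above with the observation that the very first relay happens during the opening common update, when all cells are simultaneously active and each host cell absorbs its left neighbour's entire $y$ at once; this gives the relay a head start of one host cell's worth and is what should let the window close by the last activation rather than only at the common update. Making the bookkeeping of the activation times precise along the whole relay chain, and verifying that it closes by time $\tc-p_i$ regardless of how large $p_i$ is relative to the periods occurring to its left, is the delicate step; once it is in place the equality $\ell_i=\tc$ (and, symmetrically, $|z_i|=\tc$) follows, which is exactly what Algorithm~\ref{alg:msfssp} needs in order to advance the $p_i$ simulated v-cells by $\tc$ steps.
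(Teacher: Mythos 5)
Your setup is on the right track where it overlaps with the paper: the one-sided reduction and the recurrence $\ell_i(t+1)=\min(\tc,\ell_{i-1}(t)+p_{i-1})$ for active cells are exactly the quantitative core of the paper's argument. But from there your plan diverges in a way that leaves a genuine gap. The block-crossing analysis of Subsection~\ref{subs:signals} that you invoke applies only to the special instances built from uniform-period blocks of length $\tc/p$; a general problem instance is an arbitrary sequence of periods, so there are no blocks to induct over, and \enquote{$\tc$ steps per block of $\tc$ v-cells} is a worst-case statement about how slowly a signal may be forced to move, not a lower bound on how fast the relay provably is. The paper needs no such structure: it inducts directly on global time with the invariant that a cell active at within-cycle time $t=\bar t\bmod\tc$ satisfies $|x|\geq\min(t,\tc)$ after its transition; the inductive step uses only that the left neighbor (period $q$) was last active at some $t'=kq$ with $(k+1)q\geq t$, whence $|x|\geq\min(\tc,\min(kq,\tc)+q)=\min(\tc,(k+1)q)\geq\min(\tc,t)$.

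More seriously, the step you yourself flag as \enquote{delicate} --- closure of the window already at the cell's \emph{last} activation $t=\tc-p_i$ --- is not merely unproven in your sketch; as you state it, it fails. Take $\Per=\{1,2,3\}$ (so $\tc=6$) and a cell of period $2$ all of whose left neighbors have period $1$: starting from empty $x$-components at the opening common update, its $x$-length after its activations at $t=0,2,4$ is $1,3,5$, so at its last activation it holds only $\tc-p_i+1$ collected v-states; your \enquote{head start of one host cell's worth} at the opening common update buys exactly that $+1$, never the missing $p_i$. What actually makes the count reach $\tc$ is not a head start but the alignment at the \emph{closing} common update: there the left neighbor was last active at $t'=\tc-q$, and since $q$ divides $\tc$ the final increment gives $\min(\tc,(\tc-q)+q)=\tc$ exactly. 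That is the moment --- \enquote{after one cycle}, as the lemma is worded --- at which the paper's invariant yields length $\tc$ (the prose of Algorithm~\ref{alg:msfssp} also mentions the last activation, so your reading is understandable, but the bound the induction delivers strictly before the common update is only $\tc-p_i$, and the counterexample shows this is tight up to the $+1$). Any honest bookkeeping along your relay chain would therefore run into this obstruction rather than resolve it. A smaller omission: you never treat cells adjacent to the border; the paper disposes of this case by letting such a cell act as if the border neighbor's $x$ were already filled with $\tc$ states $\sB$.
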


\begin{proof}
  Without loss of generality we only consider the case to the left.
  We will prove by induction on the global time that for all
  $\bar{t}\in\N_0$ the following holds:

  For $t=\bar{t}\bmod\tc$ for each cell with components $(x,y,z)$ as
  above and with period $p$ and for all $j\in\N_0$ with
  $0\leq j \leq t$: If $jp=t$ then the cell is active and
  after the transition $|x|\geq \min(jp, \tc)$.

  If $\bar{t}=0$ then $t=0$, $j=0$, and obviously $|x|\geq 0$ holds.

  Now assume that the statement is true for all times less or equal
  some $\bar{t}-1$.
  Again, nothing has to be done if $t=0$; assume therefore that $t>0$.
  
  Consider a cell with components $(x,y,z)$ and period $p$ and its
  left neighbor with components $(x',y',z')$ and period $q$, and
  therefore $|y'|=q$.
  Let $\bar{t}'$ be the time when the left neighbor was active for the
  last time before $\bar{t}$, and let $t'=\bar{t}'\bmod\tc$.
  Then $t'<t$, $t'=kq$ for some $k$, and since it was the
  \emph{last} activation before $t$, $t'+q=(k+1)q\geq t$.
  By induction hypothesis the left neighbor already had
  $|x'|\geq \min(kq,\tc)$.
  The new $x$ of the cell under consideration is $x=\sfx_{\tc}(x'y')$
  which then has length at least
  $\min(\tc,|x'y'|) \geq \min(\tc,\min(kq,\tc)+q) = \min(\tc,(k+1)q)$.
  Since $(k+1)q\geq jp$ the proof is almost complete.
  
  Strictly speaking the above argument does not hold when the left
  neighbor is a border cell.
  But in that case a cell can treated by its neighbor as if that has
  already $x$ filled with $\tc$ states $\sB$.
\end{proof}

\begin{lemma}
  For a problem instance $c$ of size $n$ with periods $p_1,\dots,p_n$
  the time needed by Algorithm~\ref{alg:msfssp} for synchronization
  can be bounded by 
  \begin{equation}
    \left( 2\sum_{i=1}^{n}  p_i \right) + d \;.
    \label{eq:upper-bound-2}
  \end{equation}
\end{lemma}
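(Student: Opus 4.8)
The plan is to read the whole run as a faithful simulation of the given standard FSSP automaton $Q$ on a line of $N=\sum_{i=1}^{n}p_i$ virtual cells, and then to convert the firing time of that standard run back into real \msca{} time. First I would fix the correspondence: host cell $i$ hosts the $p_i$ consecutive v-cells numbered $1+\sum_{j<i}p_j,\dots,\sum_{j\le i}p_j$, so that the initial v-configuration is exactly the standard FSSP instance of size $N$ --- the general in v-cell $1$, soldiers in v-cells $2,\dots,N$, and $\sB$ supplied by the border host cells. Let $T_f$ be the time at which $Q$ makes all $N$ cells fire; since $Q$ solves the standard FSSP we have $T_f\le 2N+c_0$ for a constant $c_0$ depending only on $Q$.

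The central step is a faithfulness claim, which I would prove by induction on the cycle index $k$: immediately after the common update at real time $k\tc$, the $y$-component of every host cell equals the block of v-states that the standard run exhibits at v-time $k\tc$. The base case $k=0$ is the initialization just described. For the inductive step I would invoke Lemma~\ref{lem:fast-collect}: during cycle $k$ each host cell accumulates, by its last activation, the $\tc$ neighboring v-states on each side that the standard configuration has at v-time $k\tc$. Because one step of the radius-$1$ standard CA propagates information by only one cell, a host that knows its own $p_i$ v-cells together with $\tc$ neighbors on either side has exactly the data needed to compute, internally, the effect of $\tc$ global standard steps on its own $p_i$ v-cells; carrying out this update at the last activation advances every host's $y$ from v-time $k\tc$ to v-time $(k+1)\tc$ in lockstep. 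One must also check that the collected windows are coherent, \ie all taken at the same v-time $k\tc$; this holds because a host's own $y$ is frozen throughout a cycle and replaced only at that host's last activation, which is the situation underlying Lemma~\ref{lem:fast-collect}.

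Given faithfulness the timing is a short calculation. Every cycle lasts exactly $\tc$ real steps and advances the virtual computation by exactly $\tc$ v-steps, a $1{:}1$ ratio. Put $K=\lceil T_f/\tc\rceil$, so that $(K-1)\tc<T_f\le K\tc$ and the v-time at which all v-cells are simultaneously in the virtual firing state falls inside cycle $K-1$; every host detects this during that cycle's last activation. At the immediately following common update, real time $K\tc$, all host cells enter the real firing state at once, and by faithfulness none could have detected a virtual $\sF$ earlier, since $Q$ has no cell in $\sF$ before $T_f>(K-1)\tc$. Hence this is a valid solution firing at
\[
  K\tc=\Bigl\lceil\frac{T_f}{\tc}\Bigr\rceil\tc\le T_f+\tc-1\le 2\sum_{i=1}^{n}p_i+(c_0+\tc-1),
\]
so~(\ref{eq:upper-bound-2}) holds with $d:=c_0+\tc-1$, a constant determined by \Per.

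The step I expect to be the main obstacle is the faithfulness induction --- in particular verifying that the per-host ``advance by $\tc$ steps'' really equals $\tc$ applications of the global standard transition, and that the left and right windows collected by neighboring hosts are mutually coherent even though different hosts perform their final $y$-update at different sub-times during the closing stretch of a cycle. The border cells need only a brief separate remark: since they never change state, a host treats such a neighbor as supplying a constant block of $\tc$ copies of $\sB$, exactly as already noted in the proof of Lemma~\ref{lem:fast-collect}.
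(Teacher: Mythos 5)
Your proposal is correct and follows essentially the same route as the paper: the paper's proof likewise counts $k=\sum_{i=1}^{n}p_i$ v-cells, takes the standard firing time $\leq 2k+O(1)$, and charges $\tc$ real steps per cycle of $\tc$ simulated v-steps, arriving at $\tc\cdot\lfloor(2k-2)/\tc\rfloor+\tc+1\leq 2k+d$ --- your $K\tc\leq T_f+\tc-1$ is the same computation, and your extra off-by-one for the real firing transition is absorbed into $d$. The faithfulness induction you flag as the main obstacle is exactly what the paper delegates to the algorithm description and Lemma~\ref{lem:fast-collect} rather than re-proving inside this lemma, so your more explicit treatment is a refinement, not a different approach.
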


\begin{proof}
  The total number of v-cells simulated is $k=\sum_{i=1}^{n}  p_i$.
  A time optimal FSSP algorithm for standard CA needs $2k-2$ steps for
  the synchronization of that many cells.\
  During each cycle of length $\tc$ exactly $\tc$ steps of each v-cell
  are simulated, except possibly for the last cycle.
  During that, the $\sF$ v-state may be reached in less than $\tc$
  steps.

  Hence the total number of steps is
  $\tc \cdot \lfloor (2k-2) / \tc\rfloor +\tc +1 \leq 2k+d = \left(
    2\sum_{i=1}^{n} p_i \right) +d $ for some appropriately chosen
  constant $d$.
\end{proof}
To sum up taking together Theorem~\ref{thm:lower-bound} and
Theorem~\ref{thm:upper-bound} one obtains
\begin{corollary}
  For each $P$ there is a constant $d$ such that there is an \msca{}
  for the \msfssp{} which needs synchronization time
  $\left( 2\sum_{i=1}^{n} p_i \right) + d$ and for infinitely many
  sizes $n$ there are $a^n$ problem instances ($a>1$) for which there
  is a lower bound on the synchronization time of
  $\left( 2\sum_{i=1}^{n} p_i \right) - d$.
\end{corollary}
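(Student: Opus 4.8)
The plan is to derive the Corollary directly by combining the two principal results already established, since each half of the statement is furnished verbatim by one of the theorems. No new automaton and no new counting argument are needed; the only point requiring a moment's care is reconciling the two constants that Theorem~\ref{thm:lower-bound} and Theorem~\ref{thm:upper-bound} both happen to name $d$.

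First I would invoke Theorem~\ref{thm:upper-bound}, which for the given set $P$ already yields a constant $d_1$ together with an explicit \msca{} --- namely Algorithm~\ref{alg:msfssp} running a time-optimal standard FSSP solution on the virtual cells --- that synchronizes every \msfssp{} instance $c$ of size $n$ with periods $p_1,\dots,p_n$ in time at most $\left(2\sum_{i=1}^n p_i\right)+d_1$. This is precisely the upper-bound half of the claim, and it fixes the concrete automaton to which the rest of the argument will refer.

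Next I would invoke Theorem~\ref{thm:lower-bound}, which supplies constants $a>1$ and $d_2$ such that for infinitely many sizes $n\geq 2$ there are at least $a^n$ problem instances of size $n$ whose synchronization time is at least $\left(2\sum_{i=1}^n p_i\right)-d_2$. The decisive observation is that this lower bound is stated for \emph{every} \msca{} solving the \msfssp; since the automaton produced above is such a solver, the bound applies to it in particular, so for each of those exponentially many instances the time realized by that automaton is squeezed between $2\sum_{i=1}^n p_i-d_2$ and $2\sum_{i=1}^n p_i+d_1$.

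Finally, to present both estimates with the single constant that the Corollary uses, I would set $d=\max(d_1,d_2)$. Then the upper bound weakens from $+d_1$ to $+d$ and the lower bound weakens from $-d_2$ to $-d$, each in the admissible direction, so both assertions hold simultaneously with this $d$ and the same base $a>1$ inherited from Theorem~\ref{thm:lower-bound}. I expect no genuine obstacle here: the mathematical substance lives entirely in the two theorems, and this final statement is only the bookkeeping step that records that one and the same automaton attains the upper bound and is bound by the lower bound, thereby pinning the optimal synchronization time to within an additive constant across an exponentially large family of instances.
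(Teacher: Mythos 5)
Your proposal is correct and matches the paper exactly: the paper derives this corollary with no separate proof, simply \enquote{taking together} Theorem~\ref{thm:lower-bound} and Theorem~\ref{thm:upper-bound}, just as you do. Your extra step of setting $d=\max(d_1,d_2)$ is harmless bookkeeping the paper leaves implicit.
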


%--------------------------------------------------------------------
\section{Outlook}
\label{s:outlook}

In this paper we have described a \msca{} for the synchronization
problem which is sometimes faster and never slower than the one by
Manzoni and Umeo.
For a number of problem instances which is exponential in the number
of cells to be synchronized the time needed is close to some lower
bound derived in Section~\ref{s:lower-bound}.
An initial version of the proof could be improved thanks to an
anonymous reviewer.

While higher-dimensional \msca{} have been considered
\cite{Manzoni_2016_FSSP_ip}, in the present paper we have restricted
ourselves to the \emph{one-dimensional} case.
In fact it is not completely clear how to generalize the algorithm
described above to two-dimensional CA.
The \msca{} described in this paper it is essential that
\begin{itemize}
\item from one cell to another one there is only one shortest path
\item and it is clear how many v-cells a cell should simulate.
\end{itemize}
The generalization of this approach to $2$-dimensional CA is not
obvious for us.
In addition the derivation of reasonably good lower bounds on the
synchronization times seem to be more difficult, but if one succeeds
that might give a hint as to how to devise an algorithm.
As a matter of fact, the same happened in the one-dimensional setting.

Similarly it is not clear how to apply the ideas in the case of CA
solving some \emph{other} problem, not the FSSP, because only (?) for
the FSSP it is obvious which state(s) to choose for the v-cells in the
initial configuration.

Both aspects, algorithms and lower bounds, are interesting research
topics but need much more attention.
Even in the $1$-dimensional case there is still room for improvement
as has been seen in Figure~\ref{fig:signal-not-slow}.

It remains an open problem how cells with different update periods
should be ordered to ensure that they can be synchronized as soon as
possible.


\begin{thebibliography}{1}

\bibitem{Manzoni_2016_FSSP_ip}
Manzoni, L., Porreca, A.E., Umeo, H.: The firing squad synchronization problem
  on higher-dimensional {CA} with multiple updating cycles.
\newblock In: Fourth International Symposium on Computing and Networking,
  {CANDAR} 2016, Hiroshima, Japan, November 22-25, 2016, pp. 258--261 (2016)

\bibitem{Manzoni_2014_FSSP_ar}
Manzoni, L., Umeo, H.: The firing squad synchronization problem on {CA} with
  multiple updating cycles.
\newblock Theor. Comput. Sci. \textbf{559}, 108--117 (2014)

\bibitem{Umeo_2009_FSSP_ic}
Umeo, H.: Firing squad synchronization problem in cellular automata.
\newblock In: Encyclopedia of Complexity and Systems Science, pp. 3537--3574.
  Springer (2009)

\bibitem{Umeo_2020_HSCA_ar}
Umeo, H.: How to synchronize cellular automata -- recent developments --.
\newblock Fundam. Inform. \textbf{171}(1-4), 393--419 (2020)

\end{thebibliography}
\end{document}